\newcounter{blank}
\newtheorem{maintheorem*}[blank]{Main Theorem}
\newtheorem{theorem*}[blank]{Theorem}
\newtheorem{theorem}{Theorem}[section]
\newtheorem{lemma}[theorem]{Lemma}
\newtheorem{definition}[theorem]{Definition}
\newtheorem{remark*}[blank]{Remark}
\newtheorem{note*}[blank]{Note}
\newenvironment{equationarray*}{\begin{equation*}\begin{array}{r@{\ }c@{\ }l}}{\end{array}\end{equation*}}
\renewcommand{\leq}{\leqslant}
\renewcommand{\geq}{\geqslant}
\DeclareMathOperator{\dom}{\mathrm{dom}}
\newcommand{\R}{\mathbb{R}}
\newcommand{\cvE}{\mathcal{E}}
\newcommand{\cvT}{\mathcal{T}}
\newcommand{\cvV}{\mathcal{V}}
\newcommand{\cvO}{\mathcal{O}}
\newcommand{\cvS}{\mathcal{S}}
\newcommand{\tuple}[1]{\langle #1 \rangle}
\newcommand{\autA}{\mathcal{A}}
\DeclareMathOperator{\Type}{{Type}}
\newcommand{\tcat}{\cdot}
\DeclareMathOperator{\Val}{{Val}}
\newcommand{\fval}{\mathit{fval}}
\newcommand{\lval}{\mathit{lval}}
\newcommand{\fstate}{\mathit{fstate}}
\newcommand{\lstate}{\mathit{lstate}}
\newcommand{\ltime}{\mathit{ltime}}
\newcommand{\bx}{\mathbf{x}}
\newcommand{\by}{\mathbf{y}}
\newcommand{\bv}{\mathbf{v}}
\newcommand{\bz}{\mathbf{z}}
\newcommand{\trans}[1]{\xrightarrow{#1}}
\DeclareMathOperator{\trace}{\mathit{trace}}
\DeclareMathOperator{\traces}{\mathit{Traces}}
\DeclareMathOperator{\utrace}{\mathit{utrace}}
\DeclareMathOperator{\trajs}{\mathit{Trajs}}
\DeclareMathOperator{\untime}{\mathit{untime}}
\DeclareMathOperator{\exec}{\mathit{Exec}}
\DeclareMathOperator{\obs}{\mathit{obs}}
\DeclareMathOperator{\Obs}{\mathit{Obs}}
\newcommand{\cut}[1]{ }
\newcommand{\eclass}[1]{\llbracket #1 \rrbracket}
\newcommand{\Th}{\mathfrak{T}}
\newcommand{\clC}{\mathfrak{C}}
\title{A Game-Theoretic approach to Fault Diagnosis of Hybrid Systems\thanks{This research was partly supported by the EU Project FP7-ICT-223844 CON4COORD.}}
\author{Davide Bresolin \quad Marta Capiluppi
\institute{Università di Verona \\
					Dipartimento di Informatica \\
					Verona, Italy}
\email{davide.bresolin@univr.it, marta.capiluppi@univr.it}
}
\begin{document}
\maketitle

\begin{abstract}
Physical systems can fail. For this reason the problem of identifying and reacting to faults has received a large attention in the control and computer science communities. In this paper we study the fault diagnosis problem for hybrid systems from a game-theoretical point of view. A hybrid system is a system mixing continuous and discrete behaviours that cannot be faithfully modeled neither by using a formalism with continuous dynamics only nor by a formalism including only discrete dynamics. We use the well known framework of hybrid automata for modeling hybrid systems, and we define a Fault Diagnosis Game on them, using two players: the environment and the diagnoser. The environment controls the evolution of the system and chooses whether and when a fault occurs. The diagnoser observes the external behaviour of the system and announces whether a fault has occurred or not. Existence of a winning strategy for the diagnoser implies that faults can be detected correctly, while computing such a winning strategy corresponds to implement a diagnoser for the system. We will show how to determine the existence of a winning strategy, and how to compute it, for some decidable classes of hybrid automata like o-minimal hybrid automata.
\end{abstract}

\section{Introduction}
In modern complex systems continuous and discrete dynamics interact. This is the case of wide manufacturing plants, agents systems, robotics and physical plants. This kind of systems, called hybrid in their behaviour, need a specific formalism to be analysed. In order to model and specify hybrid systems in a formal way, the notion of \emph{hybrid automata} has been introduced~\cite{Alur95,maler91from}. Intuitively, a hybrid automaton is a ``finite-state automaton'' with continuous variables that evolve according to dynamics characterizing each discrete state. In the last years, a wide spectrum of modeling formalism and algorithmic techniques has been studied in the control and computer science communities to solve the problems of simulation, verification and control synthesis for hybrid systems. Much scarce attention have been posed to the problem of dealing with faults. When a hybrid system fail, the failure propagates throughout the system both in continuous and discrete evolutions. Nevertheless the interaction of continuous and discrete dynamics leads to the need of studying new theories for fault tolerance. 

A \textit{{fault}} is a deviation of the system structure or the system parameters from the
nominal situation \cite{FTC_book}. This implies that after the occurrence of a fault the system
will have a behaviour which is different from the nominal one. Hence \textit{Fault Tolerance} is
the property of reacting to faults. In particular the analysis of fault
tolerance consists in establishing if a given system is still able to
achieve its tasks after the occurrence of a given fault, whereas the synthesis of fault tolerance
resides in providing a given system the tools to react to a given faulty situation. The fault tolerance problem\index{fault!tolerance!problem} can be divided in two tasks: fault
detection\index{fault!detection} and isolation\index{fault!isolation} (FDI\index{FDI}) and control
redesign\index{control redesign}. FDI produces a diagnostic result including detection and location
of the fault, and if possible an estimate of the dimensions of the fault. 
In this paper we concentrate our attention to the problem of fault detection and isolation for hybrid systems.

Fault tolerance and fault tolerant systems have been studied by the control community since the late '70s, as in
\cite{chem_FDI78} where fault detection for chemical processes is introduced, and later in
\cite{FDI81}. One of the first surveys on fault detection is \cite{isermann}, which is dated 1984,
and where some methods based on modelling and estimation are introduced. Much later the interesting
book \cite{patton_book} collects some results on Fault Detection and Isolation (FDI) methods.
For a complete outline of the recent improvements in this field, it
is worth citing \cite{ind_procFD} where a quite
new approach to fault detection in industrial (batch) systems is introduced and \cite{colin_flight},
an overview on fault tolerant techniques for flight control.

In the computer science community fault tolerance is also known as \emph{Fault localization and correction}, and it is usually viewed as the problem of finding and fixing bugs in a software program or in a digital circuit. 
One of the most systematic approaches in this area is Model Based Diagnosis, where an oracle provides an example of correct behavior that is inconsistent with the behavior of the faulty system, and a correct model of the system is usually not necessary~\cite{Console91}. Model based diagnosis can be distinguished between abduction-based and consistency-based diagnosis. Abduction-based diagnosis~\cite{Poole87} assumes that it is known in which ways a component can fail. Using a set of fault models, it tries to find a component and a corresponding fault that explains the observation. Consistency-based diagnosis~\cite{Fey08,Reiter87} considers the faulty behavior as a contradiction between the actual and the nominal behavior of the system. It does not require the possible faults to be known, and it proceeds by dropping the assumptions on the behavior of each component in turn. If this removes the contradiction, the component is considered a candidate for correction.
More recently, applicability of discrete game theory to fault localization and automatic repair of programs have been proposed in~\cite{Jobstmann11}. In this alternative setting, the specification of the correct behaviour is given in Linear Temporal Logic and the correction problem is stated as a game, in which the protagonist selects a faulty component and suggests alternative behaviours.

Not many attempts have been made until now in the field of fault diagnosis
for hybrid systems. This can be due in first instance to the hard task of state estimation in this
kind of systems. Indeed to know if a fault has occurred it has to be detected if the system is
behaving in an unusual way, that is based on the knowledge of the state in which the system is
working. When dealing with hybrid systems a state estimator must provide both the continuous and
the discrete state. The accomplishment of this task is generally difficult because of the coupling
of the two dynamics. 

Among the first methods for fault detection of hybrid systems it is worth citing the ones presented
in \cite{MM_FDI2} and \cite{schroeder}. These two methods are quite different, because they are
based on opposite models of hybrid systems. The first one deals with mixed logical dynamical (MLD)
systems, and mainly with faults on the continuous dynamics, whereas the second one uses quantised
systems, then it deals mainly with the discrete part.
The method introduced in \cite{fourlas01} presents some results based on Hybrid Input/Output Automata 
\cite{Lynch03} and extends the theory of {diagnosability} for discrete events systems to the
hybrid case. As usual in this kind of discrete event approach to hybrid systems, the two dynamics
are kept separated, which means that the diagnoser has to first check if some fault has occurred in
the current (discrete) mode, then to check the continuous dynamics inside the mode, finally a
supervisor will decide which kind of fault has occurred and where. Nevertheless the diagnosability
is tested on the hybrid dynamics, using the notion on hybrid traces.

In this paper we choose to start from the modeling framework of \cite{Lynch03}, where Hybrid Automata assume a distinction between internal and external actions and variables. We add faults to this model, by using a distinguished \emph{fault} action. This is not a restrictive assumption, since every kind of fault can be modeled as an internal action of an automaton, supposing the fault action leads from a nominal state to a faulty one in the system. We assume that  after a fault the system remains in its faulty situation and never recovers. 

We choose to use game theory applied to fault diagnosis of hybrid systems because it allows us not to split the continuous and the discrete behaviours. A hybrid game is a multiplayer structure where the players have both discrete and continuous moves and the game proceeds in a sequence of rounds. In every round each player chooses either a discrete or a continuous move among the available ones~\cite{tomlin2000}. 
Hybrid games has been successfully applied to solve the controller synthesis problem for timed~\cite{Asarin98} and hybrid automata~\cite{Bouyer10,Henzinger99}, and to the fault diagnosis problem for timed automata~\cite{Bouyer05}. In our setting we model the fault diagnosis problem as a game between two players, the environment and the diagnoser. The environment controls the evolution of the system and chooses whether and when a fault occurs. The diagnoser observes the external behaviour of the system and announces whether a fault has occurred or not. Existence of a winning strategy for the diagnoser implies that faults can be identified correctly, while computing such a winning strategy corresponds to implement a diagnoser for the system. 
In contrast with the usual definition of hybrid game, our game is asymmetric, since the environment is more powerful than the diagnoser, and is under partial observability, since the diagnoser is blind to the value of internal variables and to the occurrence of internal events. We define two notions of diagnosability, and we prove that the fault diagnosis problem is solvable for the weakest notion of diagnosability for all classes of hybrid automata that admit a bisimulation with finite quotient that can be effectively computed.

\section{Hybrid Automata with Faults}

Throughout the paper we fix the \emph{time axis} to be the set of non-negative real numbers $\R^+$. An \emph{interval} $I$ is any convex subset of $\R^+$, usually denoted as $[t_1, t_2] = \{t \in \R^+ : t_1 \leq t \leq t_2\}$. For any interval $I$ and $t \in \R^+$, we define $I + t$ as the interval $\{t' + t : t' \in I\}$.

We also fix a countable universal set $\cvV$ of \emph{variables}, where every variable $v \in \cvV$ has a type $\Type(v)$ which defines the domain over which the variable ranges. Elementary types include \emph{booleans}, \emph{integers} and \emph{reals}.  Given a set of variables $V \subseteq \cvV$, a \emph{valuation} over $V$ is a function that associate every variable in $V$ with a value in its type. We often refer to valuation as \emph{states}, and we denote them as $\bx, \by, \bz, \ldots$. The set $\Val(X)$ is the set of all valuations over $X$. Given a valuation $\bx$ and a subset of variables $Y \subseteq X$, we denote the \emph{restriction} of $\bx$ to $Y$ as $\bx|Y$. The  restriction operator is extended to sets of valuations in the usual way. 

A notion that will play an important role in the paper is the one of \emph{trajectory}. A trajectory over a set of variables $X$ is a function $\tau: I \mapsto \Val(X)$, where $I$ is a left-closed interval with left endpoint equal to $0$. With $\dom(\tau)$ we denote the domain of $\tau$, while with $\tau.\ltime$ (the \emph{limit time} of $\tau$) we define the supremum of $\dom(\tau)$. The \emph{first point} of a trajectory is $\tau.\fval = \tau(0)$, while, when $\dom(\tau)$ is right-close, the \emph{last point} of a trajectory is defined as $\tau.\lval = \tau(\tau.\ltime)$. We denote with $\trajs(X)$ the set of all trajectories over $X$. Given a subset $Y \subseteq X$, the \emph{restriction} of $\tau$ to $Y$ is denoted as $\tau|Y$ and it is defined as the trajectory $\tau' : \dom(\tau) \mapsto \Val(Y)$ such that $\tau'(t) = \tau(t)|Y$ for every $t \in \dom(\tau)$. 

A trajectory $\tau'$ is a \emph{prefix} of another trajectory $\tau$ if and only if $\tau'.\ltime \leq \tau.\ltime$ and $\tau'(t) = \tau(t)$ for every $t \in \dom(\tau')$. Conversely, we say that $\tau'$ is a \emph{suffix} of $\tau$ if there exists $t \in \R^+$ such that $\tau'.\ltime = \tau.\ltime - t$ and $\tau'(t') = \tau(t'+t)$ for every $t' \in \dom(\tau')$. Given two trajectories $\tau_1$ and $\tau_2$ such that $\tau_1.\lstate = \tau_2.\fstate$, their concatenation $\tau_1 \cdot \tau_2$ is the trajectory with domain $\dom(\tau_1) \cup (\dom(\tau_2) + \tau_1.\ltime)$ such that $\tau_1 \cdot \tau_2(t) = \tau_1(t)$ if $t \in \dom(\tau_1)$,  $\tau_1 \cdot \tau_2(t) = \tau_2(t-\tau_1.\ltime)$ otherwise. We extend the concatenation operation to countable sequences of trajectories in the usual way.

\medskip

We model hybrid systems with faults by using the formalism of Hybrid Automata (HA) as defined by Lynch, Segala, and Vandraager in~\cite{Lynch03}, enriched with a distinguished \emph{fault} action, and with a partition of the state space into faulty and non-faulty states. We assume a single type of faults for simplicity reasons. However, all the results presented in the paper can be easily generalized to a finite number of faults. 

\begin{definition}\label{def:hioa}
A \emph{Hybrid Automaton with Faults} is a tuple $\autA = \tuple{W, X, Q, Q_f, \Theta, E, H, f, D, \cvT}$, where:

\begin{compactitem}
	\item $W$ and $X$ are two finite sets of \emph{external} and \emph{internal} variables, disjoint from each other.  We define $V = W \cup X$;

	\item $Q \subseteq \Val(X)$ is the set of \emph{states};
	
	\item $Q_f \subset Q$ is the set of \emph{faulty states}. We define $Q_n$  the set of \emph{non-faulty states} such that $Q = Q_n \cup Q_f$ and $Q_n \cap Q_f =\emptyset$.
	
	\item $\Theta \subseteq Q_n$ is a nonempty set of \emph{initial states};
	
	\item $E$ and $H$ are two finite sets of \emph{external} and \emph{internal} actions, disjoint from each other. We define $A = E \cup H$;

	\item $f \in H$ is a distinguished \emph{fault} action;

	\item $D \subseteq Q \times A \times Q$ is the set of \emph{discrete transitions} respecting the following properties: 
		\begin{compactenum}[\bf D1]
			\item for every $\bx \in Q_n$, there exists $\bx'\in Q_f$ such that $(\bx,f,\bx')\in D$;
			\item for every $(\bx, f, \bx') \in D$, $\bx \in Q_n$ and $\bx' \in Q_f$;
			\item for every $(\bx, a, \bx') \in D$ such that $a \neq f$, $\bx \in Q_f$ iff $\bx' \in Q_f$;
		\end{compactenum}

	\item $\cvT$ is a set of trajectories on $V$. Let $\tau.\fstate = \tau.\fval|X$ and  $\tau.\lstate = \tau.\lval|X$, if $\tau$ closed: we require $\cvT$ to respect the following properties:
		\begin{compactenum}[\bf T1]
			\item \emph{faulty state invariance}: for every $\tau$, either $\tau(t)|X \in Q_f$ for every $t\in\dom(\cvT)$, or $\tau(t)|X \in Q_n$ for every $t\in\dom(\cvT)$;
			\item \emph{prefix closure}: for every $\tau'$ prefix of $\tau$, $\tau' \in \cvT$;
			\item \emph{suffix closure}: for every $\tau'$ suffix of $\tau$, $\tau' \in \cvT$;
			\item \emph{concatenation closure}: for every (possibly infinite) sequence of trajectories $\tau_0,\tau_1,\tau_2,\ldots \in \cvT$ such that $\tau_i.\lstate=\tau_{i+1}.\fstate$, the concatenation $\tau_0\tcat \tau_1\tcat \tau_2\tcat \ldots \in \cvT$;
	\end{compactenum}
\end{compactitem}
\end{definition}

\noindent Condition \textbf{D1} implies that a fault can occur at any time of the evolution.
Conditions \textbf{D2} and \textbf{D3} implies that the only discrete action that can switch between non-faulty and faulty states is the fault action $f$, while condition \textbf{T1} implies that trajectories cannot switch between faulty and non-faulty states. Conditions \textbf{T2}, \textbf{T3}, and \textbf{T4} express some natural closure properties on $\cvT$. 

Notice that, following the same approach as Lynch, Segala, and Vandraager, we have defined the state of a Hybrid Automaton with Faults to depend only on the values of the internal variables $X$. 
However, the choice of the set of trajectories $\cvT$ can constrain the admissible values for the external variables in $W$. For this reason, we define the set of \emph{extended states} as $S = \{\bv\in\Val(V) | \exists \tau \in \cvT $ s.t. $ \tau.\fval = \bv\}$. By \textbf{T1} we have that $S|X = Q$, and thus the definition of extended states is sound. The set of \emph{faulty extended states} $S_f$ and the set of \emph{non-faulty extended states} $S_n$ can be defined in a similar way.

\medskip

Given a set of variables $V$ and a set of actions $A$, a \emph{$(V,A)$-sequence} is a possibly infinite sequence $\alpha = \tau_0 a_1 \tau_1 a_2 \tau_2 \ldots$ such that

\begin{compactenum}
	\item $\tau_i$ is a trajectory on $V$, for every $i \geq 0$,
	\item $a_i$ is an action in $A$, for every $i \geq 0$,
	\item if $\alpha$ is finite then it ends with a trajectory, and
	\item if $\tau_i$ is not the last trajectory of $\alpha$, then $\dom(\tau_i)$ is right-closed.
\end{compactenum}

\noindent If $V' \subseteq V$ and $A' \subseteq A$, then the \emph{$(V',A')$-restriction} of $\alpha$ (denoted $\alpha|(V',A')$ is the $(V',A')$-sequence obtained by first projecting all trajectories of $\alpha$ on the variables in $V'$, then removing the actions not in $A'$, and finally concatenating all adjacent trajectories. $(V,A)$-sequences are used to give the semantics of Hybrid Automata in terms of \emph{executions} and \emph{traces}. 

\begin{definition}\label{def:exectraces}
	An \emph{execution} of a Hybrid Automaton $\autA$ from a state $\bx \in Q$ is a $(V,A)$-sequence $\alpha = \tau_0 a_0 \tau_1 a_1 \tau_2 a_2 \ldots$ such that:
	\begin{compactenum}
		\item every $\tau_i$ is a trajectory in $\cvT$;
		\item $\tau_0.\fstate = \bx$;
		\item if $\tau_i$ is not the last trajectory in $\alpha$, then $\tau_i.\lstate \trans{a} \tau_{i+1}.\fstate$, with $a \in A$.
\end{compactenum}
	The corresponding \emph{trace}, denoted $\trace(\alpha)$, is the restriction of $\alpha$ to external variables and external actions. 
\end{definition}

We say that an execution $\alpha = \tau_0 a_0 \tau_1 a_1 \tau_2 a_2 \ldots$ is \emph{faulty} if for some $i \geq 0$, $a_i = f$. An execution $\alpha$ is \emph{maximal} if it starts from a state in $\Theta$ and either it is infinite or its last trajectory $\tau_n$ is such that \emph{(i)} there exists no trajectory $\tau'\in\cvT$ such that $\tau_n$ is a prefix of $\tau'$, and \emph{(ii)} there exists no discrete transition $(\bx,a,\bx')$ with $\bx = \tau_n.\lstate$. Moreover, we say that an execution $\alpha$ is \emph{progressive} if it is infinite and it contains an infinite number of occurrences of external actions. Given a Hybrid Automaton $\autA$, we denote by $\exec(\autA)$ the set of all maximal execution of $\autA$, and by $\traces(\autA)$ the set of all maximal traces of $\autA$, that is, the set $\{\trace(\alpha) : \alpha \in \exec(\autA)\}$. $\autA$ is \emph{progressive} if all executions in $\exec(\autA)$ are progressive.

We say that a hybrid automaton with faults is \emph{diagnosable} if (maximal) faulty executions can be distinguished from non-faulty ones by looking at the corresponding traces.

\begin{definition}[Diagnosability]\label{def:diagnosable}
	We say that a Hybrid Automaton with Faults $\autA = \langle W, X, Q, Q_f, \Theta, E,$ $H, f, D, \cvT \rangle$ is \emph{diagnosable} if for any two maximal executions $\alpha_1, \alpha_2 \in \exec(\autA)$, if $\alpha_1$ is faulty then either $\alpha_2$ is faulty or $\trace(\alpha_1) \neq \trace(\alpha_2)$.
\end{definition}

The above definition of diagnosability is very general, and can be applied to a large class of faults, involving both the continuous and the discrete dynamics of the system. However, solving the fault-diagnosis problem can be very complex, if not impossible at all, under this definition. 

\medskip

In this paper we consider a weaker notion of diagnosability, that we call \emph{time-abstract diagnosability}, for which the fault-diagnosis problem can be solved in a simpler way, leaving the treatment of the stronger diagnosability notion for a subsequent paper.
We assume the system to be progressive, and we define the diagnoser as some kind of finite-state digital device, that monitors the evolution of the system by reacting to external actions and by measuring the values of external variables with a fixed and finite precision. We formally define the latter restriction by introducing the notion of \emph{observation} for the external variables.

\begin{definition}\label{def:observation}
	Given the set of external variables $W$ of a hybrid automaton with faults $\autA$, an \emph{observation} of $W$ is any finite partition $\cvO = \{O_1, \ldots, O_2\}$ of $\Val(W)$. We call the elements $O_i$ of the partition \emph{observables} for $W$.
\end{definition}

In this setting, we say that a progressive system is time-abstract diagnosable if faults can be determined only by looking at the observables and at the occurences of external discrete actions, without considering the delays and the trajectories between them. To formally define such a notion, we first need to define \emph{untimed observation traces} for hybrid automata.

\begin{definition}\label{def:untimed-semantics}
	Given a trace $\beta = \tau_0 a_0 \tau_1 a_1 \tau_2 a_2 \ldots$ of a Hybrid Automaton $\autA$, and an observation $\cvO$ for $W$, we define the corresponding \emph{untimed observation trace} as the sequence $\untime(\beta) = O_0 a_0 O_1 a_1 O_2 a_2 \ldots$ such that $\tau_i.\fval \in O_i$ for each $i \geq 0$. Given an execution $\alpha$ of $\autA$, we define $\utrace(\alpha) = \untime(\trace(\alpha))$.
\end{definition}

\begin{definition}[Time-abstract diagnosability]\label{def:ta-diagnosable}
	We say that a Hybrid Automaton with Faults $\autA = \langle W, X,$ $Q, Q_f, \Theta, E, H, f, D, \cvT \rangle$ is \emph{time-abstract diagnosable} if it is progressive and, for any two maximal executions $\alpha_1, \alpha_2 \in \exec(\autA)$, if $\alpha_1$ is faulty then either $\alpha_2$ is faulty or $\utrace(\alpha_1) \neq \utrace(\alpha_2)$.
\end{definition}

Since $\utrace(\alpha_1) \neq \utrace(\alpha_2)$ implies that $\trace(\alpha_1) \neq \trace(\alpha_2)$, a hybrid automaton that is time-abstract diagnosable is also diagnosable, but the converse does not necessarily hold. Indeed, any fault that do not change the sequence of discrete actions performed by the system, but only the delays or the continuous trajectories between them is not time-abstract diagnosable.

\section{The Fault Detection Game}

In this section we introduce the key notion of \emph{Fault Detection Game} (for time-abstract fault diagnosis), played on a Hybrid Automaton with Faults $\autA$ by two players, the \emph{Environment} and the \emph{Diagnoser}. A \emph{position} in the game is  a pair $(\bv, d) \in \Val(V) \times \{yes, no\}$, such that $\bv$ is an extended state of $\autA$. Given a current position $(\bv, d)$, we distinguish between the following kind of moves:

\begin{enumerate}
	\item \textbf{Diagnoser move}: the Diagnoser chooses an answer $d' \in \{yes, no\}$. The game continues from position $(\bv, d')$ with an Environment move, and we denote this by $(\bv,d) \trans{d'} (\bv,d')$.
	
	\item \textbf{Environment move}: the Environment chooses one of the following possible moves		\begin{compactenum}
			\item two valuations $\bv',\bv'' \in \Val(V)$, a trajectory $\tau \in \cvT$, and an external action $e \in E$ such that $\tau.\fval = \bv$, $\tau.\lval = \bv''$, and $\bv''|X \trans{e} \bv'|X$. The game continues from position $(\bv', d)$ with a Diagnoser move, and we denote this by $(\bv,d) \trans{e} (\bv',d)$; 
			
			\item two valuations $\bv',\bv'' \in \Val(V)$, a trajectory $\tau \in \cvT$, and an internal action $h \in H$ such that $\tau.\fval = \bv$, $\tau.\lval = \bv''$, and $\bv''|X \trans{h} \bv'|X$. The game continues from position $(\bv', d)$ with an Environment move, and we denote this by $(\bv,d) \trans{h} (\bv',d)$.
\end{compactenum}
\end{enumerate}

Notice that the Fault Detection Game is is asymmetric: in our framework the environment is more powerful than the diagnoser, since it can choose the continuous trajectory to follow and prevent the diagnoser to move by choosing an internal action. Moreover, the game is also under partial observability: as formally stated in the following, the diagnoser is blind to the value of internal variables and to the occurrence of internal events.

\begin{definition}[Run of the Fault Detection Game]\label{def:game-run}
A \emph{run} of the game is an infinite sequence $\rho = (\bv_0, d_0) \trans{m_1} (\bv_1, d_1) \trans{m_2} \ldots$ such that:
\begin{compactenum}
	\item $d_0 = no$,
	\item $m_1$ is a diagnoser move,
	\item for every $i \geq 1$, $(\bv_{i-1},d_{i-1}) \trans{m_i} (\bv_{i-1},d_{i-1})$ is a valid move of the game;
	\item for every $i > 1$, $m_i$ is a diagnoser move if and only if $m_{i-1}$ is an environment move with $m_{i-1} \not\in H$.
\end{compactenum}

\noindent A run is \emph{winning} for the diagnoser if one of the two conditions hold:
\begin{compactitem}
	\item either for each $i \geq 1$, $m_i \neq f$ and, for each $j \geq 1$, $d_j = no$, or
	\item there exists $i \geq 1$ such that $m_i = f$ and $j > i$ such that $d_j = yes$.
\end{compactitem}

\end{definition}

Given an observation $\cvO$ for the external variables, the corresponding \emph{observation} of a run $\rho$ is a sequence $\obs(\rho) = (O_0, d_0) \trans{m_1} (O_1, d_1) \trans{m_2} \ldots$ obtained from $\rho$  by replacing every maximal sequence of environment moves $(\bv_j, d_j) \trans{m_{j+1}}	 \ldots \trans{m_{j+k}} (\bv_{j+k}, d_{j+k})$ with $(\bv_j, d_j) \trans{m_{j+k}} (\bv_{j+k}, d_{j+k})$ and by restricting every position $(\bv_j,d_j)$ to $(O_j, d_j)$, where $O_j$ is the unique observable such that $\bv_j|W \in O_j$. We denote by $\Obs_f(\autA)$ the set of finite observations for the Fault Detection Game played on $\autA$. A strategy is a function that tells the Diagnoser which move to choose given a finite observation.

\begin{definition}\label{def:strategy}
A \emph{strategy} is a partial function $\lambda$ from $\Obs_f(\autA)$ to $\{yes, no\}$.
\end{definition}

The strategy tells the diagnoser what  answer to give at the current moment. Let $\rho$ be a run of the game, $\sigma = \obs(\rho)$ and let $\sigma_i = (O_0, d_0) \trans{m_1} \ldots  \trans{m_i} (O_i, d_i) \ldots$ be the prefix of $\sigma$ of length $i$. We say that $\rho$ is consistent with the strategy $\lambda$ when, for all $i$, if $\lambda(\sigma_i)= d$ then either $m_{i+1} = d$ or $m_{i+1}$ is an environment move. A strategy $\lambda$ is \emph{winning from a state $\bx\in Q$} if for all $\bv$ such that $\bv|X = \bx$, all runs starting in $(\bv,no)$ compatible with $\lambda$ are winning. The set of \emph{winning states} is the set of states from which there is a winning strategy.

We can now define the fault diagnosis problems we will study.

\begin{definition}[Time-abstract Diagnosability in a class $\clC$ of automata]\label{def:diagnosability}
	Given a hybrid automaton with faults $\autA \in \clC$, determine whether there exists a winning strategy in the Fault Detection Game played on $\autA$ from the initial states $\Theta$.
\end{definition}

\begin{definition}[Time-abstract Diagnoser synthesis in a class $\clC$ of automata]\label{def:diagnoser-synthesis}
	Given a hybrid automaton with faults $\autA \in \clC$, determine whether there exists a winning strategy in the Fault Detection Game played on $\autA$ from the initial states $\Theta$, and compute such a strategy if possible.
\end{definition}

\section{Computing Strategies}

In this section we will show how to solve the Time-abstract Diagnosability and the Time-abstract Diagnoser synthesis problems for some relevant classes of hybrid automata, exploiting the notion of bisimulation. Such a key notion has been introduced in many fields with different purposes (for instance, van Benthem proposed it 
as an equivalence principle between structures~\cite{vanBenthem78}). In our setting, we use bisimulation as an equivalence principle between states of a hybrid automaton. Roughly speaking, two extended states $\bv$ and $\bv'$ are \emph{bisimilar} if every 
behaviour that starts from $\bv$ can be matched by starting from $\bv'$ and vice versa.

\begin{definition}[Time-abstract bisimulation]\label{def:bisimulation}
	Given a Hybrid Automaton with Faults $\autA = \langle W, X, Q,$ $Q_f, \Theta, E, H, f, D, \cvT\rangle$, a \emph{time-abstract bismulation} is an equivalence relation $\sim \subseteq S \times S$ such that for every $\bv_1, \bv_1', \bv_2 \in S$, the following two conditions are satisfied:
\begin{multline*}
\forall a \in A, \left(\bv_1 \sim \bv_1' \text{ and } \bv_1|X \trans{a} \bv_2|X \right) 
		\Rightarrow 
			\left( \exists \bv_2' \in S \text{ s.t. } \bv_2 \sim \bv_2' \text{ and } \bv_1' \trans{a} \bv_2' \right)
	\text{, and}
\\
\shoveleft{\forall \tau \in \cvT, 
		\left(\bv_1 \sim \bv_1' \text{ and } \bv_1 = \tau.\fval \text{ and } \bv_2 = \tau.\lval \right) 
		\Rightarrow}
\\
\shoveright{\left( \exists \tau'\in \cvT, \bv_2' \in S \text{ s.t. } \bv_2 \sim \bv_2' \text{ and } 
				\tau'.\fval = \bv_1' \text{ and } \bv_2' = \tau.\lval \right).}
\\
\end{multline*}
\end{definition}

Given a hybrid automaton $\autA$ and a time-abstract bisimulation $\sim \subseteq S \times S$, we say that two extended states $\bv,\bv'\in S$ are \emph{bisimilar} if and only if $\bv \sim \bv'$. The \emph{equivalence class of $\bv$}, denoted by $\eclass{\bv}_\sim$ is defined as the set $\eclass{\bv}_\sim = \{\bv'\in S | \bv'\sim\bv\}$ (in the following, we will omit the $_\sim$ subscript when clear from the context). A time-abstract bisimulation naturally induces a partition of $S$ into equivalence classes, called \emph{bisimulation quotient of $\autA$}.

\begin{definition}[Bisimulation quotient]\label{def:bisimulation-quotient}
	Given a Hybrid Automaton with Faults $\autA$ and a time-abstract bisimulation $\sim \subseteq S \times S$, the \emph{bisimulation quotient of $\autA$ under $\sim$} is defined as the set $S/_\sim = \left\{\eclass{\bv}_\sim | \bv \in S\right\}$. 	
\end{definition}

A bisimulation $\sim$ has \emph{finite index} if the number of equivalence classes in $S/_\sim$ is finite, and of \emph{infinite index} otherwise. We say that a class $\clC$ of hybrid automata \emph{admits a bisimulation with finite quotient} if for every $\autA \in \clC$ there exists a time-abstract bisimulation $\sim$ with finite index. We say that such quotient can be \emph{effectively computed} if there exists an algorithm that can compute $\sim$ and $S/_\sim$ for every $\autA \in \clC$. In the following we concentrate our attention on the classes of hybrid automata that admits a bisimulation with finite quotient that can be effectively computed, and
we will prove that the Diagnosability and the Diagnoser synthesis problems are decidable in this case.

In the case of hybrid automata with faults, we have that the equivalence classes of a bisimulation respect the partition between faulty and non-faulty states, as formally proved by the following lemma. In the following, we denote with $S_f/_\sim$ the set of equivalence classes of the faulty extended states of $\autA$, and with $S_n/_\sim$ the set of equivalence classes of the non-faulty extended states of the automaton.

\begin{lemma}
Given a hybrid automaton with faults $\autA$ and a time-abstract bisimulation $\sim \subseteq S \times S$, we have that for every $\bv \in S_n$ and $\bv'\in S_f$, $\bv \not\sim \bv'$.
\end{lemma}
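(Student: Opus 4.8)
The plan is to argue by contradiction, exploiting the fact that the fault action $f$ is enabled precisely at the non-faulty states. Concretely, I would suppose toward a contradiction that there exist $\bv \in S_n$ and $\bv' \in S_f$ with $\bv \sim \bv'$. Since $\sim$ is an equivalence relation (hence symmetric), I may take $\bv_1 = \bv$ and $\bv_1' = \bv'$ in the first clause of Definition~\ref{def:bisimulation}, so it suffices to produce a fault transition out of $\bv$ that $\bv'$ cannot match.

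First I would use condition \textbf{D1} to fire a fault from the non-faulty side. As $\bv \in S_n$ we have $\bv|X \in Q_n$, so \textbf{D1} yields some $\bx'' \in Q_f$ with $(\bv|X, f, \bx'') \in D$. Because $S|X = Q$ (as observed just after Definition~\ref{def:hioa}), the state $\bx''$ is the $X$-projection of some extended state $\bv_2 \in S$, and hence $\bv|X \trans{f} \bv_2|X$ holds.

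Next I would apply the bisimulation to transfer this move to the faulty side. Instantiating the first condition of Definition~\ref{def:bisimulation} with $a = f$, $\bv_1 = \bv$, $\bv_1' = \bv'$, and $\bv_2$ as above, the hypotheses $\bv \sim \bv'$ and $\bv|X \trans{f} \bv_2|X$ force the existence of $\bv_2' \in S$ with $\bv' \trans{f} \bv_2'$, i.e. $(\bv'|X, f, \bv_2'|X) \in D$. Now condition \textbf{D2} states that every $f$-transition starts in $Q_n$, so $\bv'|X \in Q_n$; but $\bv' \in S_f$ gives $\bv'|X \in Q_f$, and $Q_n \cap Q_f = \emptyset$. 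This contradiction establishes $\bv \not\sim \bv'$.

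I do not expect a serious obstacle: the whole argument rests on the observation that $f$ is available from every $Q_n$-state (by \textbf{D1}) yet never available from a $Q_f$-state (by \textbf{D2}), so $f$-enabledness is already a $\sim$-invariant separating $S_n$ from $S_f$. The only point requiring a little care is the realizability step, where I invoke $S|X = Q$ to lift the discrete transition supplied by \textbf{D1} to the level of extended states, so that the discrete clause of Definition~\ref{def:bisimulation} applies; note that the trajectory clause plays no role here.
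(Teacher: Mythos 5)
Your argument is correct and is essentially identical to the paper's own proof: fire a fault transition from the non-faulty state via \textbf{D1}, transfer it across the bisimulation, and contradict \textbf{D2} because the source of the transferred transition lies in $Q_f$. Your extra remark about using $S|X=Q$ to lift the \textbf{D1}-target to an extended state is a small point of care that the paper leaves implicit, but it does not change the argument.
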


\begin{proof}
Suppose by contradiction that there exists $\bv_1 \in S_n$ and $\bv_1'\in S_f$ such that $\bv_1 \sim \bv_1'$. By \textbf{D1} we have that there must exists $\bv_2 \in S_f$ such that $(\bv_1|X,f,\bv_2|X)\in D$. By the definition of bisimulation, this implies that there exists $\bv_2'\in S$ such that $(\bv_1'|X,f,\bv_2'|X)\in D$, in contradiction with \textbf{D2}, since $\bv_1'|X\in Q_f$.
\end{proof}

Given an observation $\cvO$ of $\Val(W)$, we say that a bisimulation $\sim \subseteq S \times S$ \emph{respects $\cvO$} if for every $\bv, \bv' \in S$, $\bv \sim \bv'$ implies that $\bv|W$ and $\bv'|W$ belong to the same observable of $\cvO$. 
From now on we assume that $\sim$ respects the observation of external variables.

\medskip

We are now ready to define the key notion of \emph{state estimator} of a hybrid automaton with faults. Intuitively, a state estimator is a finite automaton that given an untimed observation trace $\beta$ of $\autA$, provides the set of states that can be reached by $\autA$ under all possible executions compatible with $\beta$.

\begin{definition}[State estimator]\label{def:estimator}
	Given a hybrid automaton with faults $\autA = \langle W, X, Q, Q_f, \Theta, E, H, f, D,$ $\cvT\rangle$, an observation $\cvO$ for the external variables, and a bisimulation with finite index $\sim \subseteq S \times S$ that respects $\cvO$, we define the \emph{state estimator of $\autA$} as the transition system $\cvE = \tuple{2^{Q/_\sim}, \Pi, \Delta}$ such that:
\begin{compactenum}[\bf E1]
	\item $2^{S/_\sim}$ is the powerset of $S/_\sim$;
	\item $\Pi \subseteq 2^{S/_\sim}$ is the set of initial states defined as
	
		\centerline{$\Pi = \{\cvS \in 2^{S/_\sim} | \exists O \in \cvO $ s.t. $ \forall \bv \in S, (\bv|X \in \Theta \wedge \bv|W \in O) \Rightarrow \eclass{\bv} \in \cvS\}$;}
		
	\item $\Delta : 2^{S/_\sim} \times  A \times \cvO \mapsto 2^{S/_\sim}$ is the transition function such that $\Delta(\cvS, a, O) = \cvS'$ iff for all finite executions $\alpha=\tau_0 a_0 \ldots a_n \tau_n$ of $\autA$,
	
	\centerline{$\left( a_n = a \wedge \eclass{\tau_0.\fval} \in \cvS \wedge \utrace(\alpha) = O_0 a O	\right) 	\Rightarrow \eclass{\tau_n.\fval} \in \cvS'$.}
\end{compactenum}
\end{definition}

The state estimator is a deterministic automaton, since the transition function associate a unique successor state to every pair of input symbols $(a,O)$. Hence, with a little abuse of notation, we can define the function $\Delta$ on untimed observation traces as follows. Given an untimed observation trace $\beta = O_0 a_0 O_1 a_1 \ldots$ and a state $\cvS \in 2^{S/_\sim}$, we define $\Delta(\cvS, \beta)$ is the sequence of estimator states $\cvS_0 \cvS_1 \ldots$ such that
\begin{inparaenum}[\it (i)]
	\item $\cvS_0 = \cvS$, and
	\item $\cvS_i = \Delta(S_{i-1}, a_{i-1}, O_i)$ for all $i \geq 1$.
\end{inparaenum}
Moreover, we define $\Delta(\beta)=\Delta(\cvS_0,\beta)$, where $\cvS_0$ is the unique state in $\Pi$ such that $\cvS_0 = \{ \eclass{\bv} \in S/_\sim$ s.t. $\bv|X \in \Theta $ and $\bv|W \in O_0\}$. The following lemma proves that the state estimator is correctly defined, and can be seen ad a consequence of the fact that time-abstract bisimulation preserves traces.

\begin{lemma}\label{lem:estimator}
Given a hybrid automaton with faults $\autA$, and a state estimator $\cvE$ for it, let $\beta$ be a finite untimed observation trace of $\autA$, and $\Delta(\beta) = \cvS_0 \cvS_1 \ldots \cvS_n$. Then, for every $\eclass{\bv} \in S/_\sim$, $\eclass{\bv} \in \cvS_n$ if and only if there exists a finite execution $\alpha = \tau_0 a_0 \tau_1 a_1 \ldots a_{m-1} \tau_{m}$ such that $\utrace(\alpha) = \beta$, $\tau_0.\fstate \in \Theta$, and $\tau_m.\fval \in \eclass{\bv}$.
\end{lemma}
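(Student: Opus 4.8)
The plan is to prove the biconditional by induction on $n$, the number of external actions occurring in the finite untimed observation trace $\beta = O_0 a_0 \ldots a_{n-1} O_n$; this is exactly the length of the sequence $\Delta(\beta) = \cvS_0 \cvS_1 \ldots \cvS_n$. For the base case $n=0$ we have $\beta = O_0$ and $\Delta(\beta) = \cvS_0$, the distinguished initial state of $\Pi$. Unfolding the definition of $\cvS_0$, a class $\eclass{\bv}$ belongs to it precisely when it has a representative $\bw$ with $\bw|X \in \Theta$ and $\bw|W \in O_0$, which matches exactly the existence of an action-free execution whose first state lies in $\Theta \cap \eclass{\bv}$ and whose untimed trace is $O_0$. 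Because $\sim$ respects $\cvO$, membership of $\bw|W$ in $O_0$ is a property of the whole class $\eclass{\bv}$, so the characterization is independent of the chosen representative.

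For the inductive step I would write $\beta = \beta' a_{n-1} O_n$ with $\beta' = O_0 a_0 \ldots O_{n-1}$, so that $\cvS_n = \Delta(\cvS_{n-1}, a_{n-1}, O_n)$ while $\cvS_0 \ldots \cvS_{n-1} = \Delta(\beta')$; by the induction hypothesis $\cvS_{n-1}$ is exactly the set of classes reachable by executions with untimed trace $\beta'$ starting in $\Theta$. For soundness, given an execution $\alpha$ with $\utrace(\alpha) = \beta$ ending in $\eclass{\bv}$, I split $\alpha$ at its last external action $a_{n-1}$: prefix and suffix closure (\textbf{T2}, \textbf{T3}) guarantee that both pieces are themselves executions, the prefix $\alpha'$ has untimed trace $\beta'$ and hence final class in $\cvS_{n-1}$, and the tail starts from that same state, carries the single external action $a_{n-1}$ together with any interspersed internal actions, and reaches observable $O_n$ --- exactly a witness of the form quantified over in the definition of $\Delta(\cvS_{n-1}, a_{n-1}, O_n)$, whence $\eclass{\bv} \in \cvS_n$. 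For completeness I reverse this: from $\eclass{\bv} \in \cvS_n$ the definition of $\Delta$ supplies a one-step witnessing execution with start class in $\cvS_{n-1}$ and end class $\eclass{\bv}$, and the induction hypothesis supplies a $\beta'$-execution ending in that start class, which I concatenate using concatenation closure (\textbf{T4}).

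The delicate point, and the step I expect to be the main obstacle, is this gluing. The $\beta'$-execution furnished by the induction hypothesis ends in some representative $\bw$ of a class in $\cvS_{n-1}$, whereas the one-step witness furnished by $\Delta$ starts from a possibly different representative of the same class, so the two pieces cannot be concatenated verbatim. To repair this I would invoke the two clauses of the time-abstract bisimulation (Definition \ref{def:bisimulation}): the discrete clause transfers the external action $a_{n-1}$ (and any absorbed internal actions) from one representative to the $\sim$-equivalent $\bw$, and the trajectory clause transfers the intervening continuous evolution, yielding a one-step execution that genuinely starts at $\bw$ and still ends in $\eclass{\bv}$; since $\sim$ respects $\cvO$ the reconstruction leaves every observable unchanged, so the concatenation has untimed trace exactly $\beta$. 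The remainder is bookkeeping of the internal actions and the fault action $f$, which are invisible in $\utrace$ but present in the executions and so must be tracked explicitly when splitting and recomposing --- in particular one must read the initial estimator state as already closed under internal-action segments for the base case to hold in full generality. This is precisely the content behind the remark that time-abstract bisimulation preserves untimed traces.
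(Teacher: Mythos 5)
Your proof follows essentially the same route as the paper's: induction on the number of external actions in $\beta$, splitting an execution at its last external action and applying the definition of $\Delta$ for the forward direction, and gluing an inductively obtained $\beta'$-execution to a one-step witness supplied by $\Delta$ for the converse. The one place you go beyond the paper is the gluing step you single out as delicate: the paper concatenates the two witness executions verbatim even though they meet only in $\sim$-equivalent (not identical) states, whereas you correctly flag this mismatch and repair it with the two clauses of the time-abstract bisimulation together with the assumption that $\sim$ respects $\cvO$ --- which is precisely the ``bisimulation preserves traces'' fact the paper invokes informally but never spells out.
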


\begin{proof}
Let $\beta = O_0 a_0 O_1 a_1 \ldots a_{n-1} O_n$ be a finite untimed observation trace of $\autA$. We prove the lemma by induction on the length of $\beta$.

If $n = 0$ then $\beta = O_0$ and the claim trivially follows from the definition of $\Delta(\beta)$.

If $n > 0$, let $\beta_{n-1} = O_0 a_0 O_1 a_1 \ldots a_{n-2} O_{n-1}$, and suppose by inductive hypothesis that the claim holds for $\beta_{n-1}$. Now, let $\alpha = \tau_0 a_0 \ldots a_{m-1} \tau_{m}$ an execution of $\autA$ such that $\utrace(\alpha) = \beta$ and $\tau_0.\fstate \in \Theta$. By the definition of untimed execution trace, let $\alpha_{n-1} = \tau_0 a_0 \ldots a_{l-1} \tau_{l}$ be the prefix of $\alpha$ such that $\utrace(\alpha_{n-1}) = \beta_{n-1}$, and let $\Delta(\beta_{n-1}) = \cvS_0 \ldots \cvS_{n-1}$. By inductive hypothesis, we have that $\eclass{\tau_l.\fval} \in \cvS_{n-1}$. Consider now the finite execution $\alpha' = \tau_l a_l \ldots a_{m-1} \tau_m$ such that $\alpha = \alpha_{n-1}\alpha'$. By the definition of untimed observation trace, we have that $\utrace(\alpha') = O_{n-1} a_{n-1} O_n$ and thus, by the definition of $\Delta$, that $\eclass{\tau_m.\fval} \in \Delta(\cvS_{n-1}, a_{n-1}, O_n) = \cvS_n$. To prove the converse implication, let $\eclass{\bv} \in \cvS_n$. By definition of $\Delta$, this implies that there exists a finite execution $\gamma = \tau_0 a_0 \ldots a_{m-1} \tau_m$ such that $\eclass{\tau_0.\fval} \in \cvS_{n-1}$, $\utrace(\gamma) = O_{n-1} a_{n-1} O_n$, and $\tau_m.\fval \in \eclass{\bv}$. By inductive hypothesis, we have that there exists a finite execution $\gamma' = \tau_0' a_0' \ldots a_{l-1}' \tau_{l}'$ such that $\utrace(\gamma') = \beta_{n-1}$, $\tau_0'.\fstate \in \Theta$, and $\tau_l'.\fval \in \cvS_{n-1}$. Hence, the finite execution $\zeta = \tau_0' a_0' \ldots a_{l-1}' \tau_0 a_0 \ldots a_{m-1} \tau_m$ is a valid execution of $\autA$ respecting the desired properties.
\end{proof}

Given the partition of the equivalence classes in $S/_\sim$ between faulty and non-faulty ones, we can distinguish between three different kinds of states $\cvS \in 2^{S/_\sim}$ of the state estimator:

\begin{compactitem}
	\item[\bf \qquad faulty states] such that $\cvS \subseteq S_f/_\sim$,
	\item[\bf \qquad non-faulty states] such that $\cvS \subseteq S_n/_\sim$, and
	\item[\bf \qquad indeterminate states] that contains both faulty and non-faulty equivalence classes.
\end{compactitem}

\noindent It turns out that there exists a winning strategy for the diagnoser on the Fault Detection Game played on $\autA$ if and only if there are no loops of indeterminate states reachable from the initial states of the estimator.

\begin{theorem}\label{teo:diagnosability}
		Given a hybrid automaton with faults $\autA$, an observation $\cvO$ for the external variables, and a bisimulation with finite index $\sim \subseteq S \times S$ that respects $\cvO$, we have that there exists a winning strategy for the diagnoser in the Fault Detection Game played on $\autA$ from the initial sates $\Theta$ if and only if there are no loops of indeterminate states reachable from the initial states $\Pi$ of the state estimator for $\autA$.
\end{theorem}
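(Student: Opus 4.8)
The plan is to prove both implications by passing from runs of the game to untimed observation traces and their (finite, deterministic) estimator runs, using Lemma~\ref{lem:estimator} throughout. The candidate strategy I would analyze is the obvious one, call it $\lambda^*$: the Diagnoser runs $\Delta$ on the untimed observation trace seen so far (possible because $\cvE$ is a finite deterministic automaton), obtains the current estimator state $\cvS$, and answers $yes$ exactly when $\cvS$ is a faulty state and $no$ otherwise.

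For the direction ``no indeterminate loop $\Rightarrow$ winning strategy'' I would show $\lambda^*$ is winning from $\Theta$. Fix a run $\rho$ from some $(\bv,no)$ with $\bv|X\in\Theta$; since $\autA$ is progressive, $\obs(\rho)$ determines an infinite untimed observation trace with infinite estimator run $\cvS_0\cvS_1\cvS_2\ldots$, and by Lemma~\ref{lem:estimator} the class of the current state of the execution underlying $\rho$ lies in $\cvS_m$ at each step. If $\rho$ is non-faulty, that class is always non-faulty, so no $\cvS_m$ is purely faulty and $\lambda^*$ answers $no$ throughout, a win. If $\rho$ is faulty, then from the fault onward the current class is faulty, so each such $\cvS_m$ is faulty or indeterminate; if every one were indeterminate, finiteness of $\cvE$ would force a repetition $\cvS_p=\cvS_q$ with $p<q$, i.e.\ a cycle of indeterminate states reachable from $\Pi$, contrary to hypothesis. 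Hence some $\cvS_m$ is purely faulty, $\lambda^*$ answers $yes$, and the run is won.

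For the converse I would argue contrapositively: a reachable loop of indeterminate states yields two maximal executions with a common untimed observation trace $\beta=\beta_{\mathrm{pre}}\,\beta_{\mathrm{loop}}^{\,\omega}$, one faulty and one non-faulty. Such a pair defeats \emph{every} strategy, because two runs in which the Environment plays these executions induce the same $\obs(\rho)$, hence, under any fixed $\lambda$, the same answer sequence $d_j$; but winning demands all $d_j=no$ on the non-faulty run and some $d_j=yes$ on the faulty run, which is impossible. The non-faulty companion is routine: every state on the loop is indeterminate, hence contains a non-faulty class, so by Lemma~\ref{lem:estimator} every finite prefix of $\beta$ is realized by a non-faulty execution; the tree of the corresponding finite non-faulty quotient paths is infinite and, by finite index of $\sim$, finitely branching, so K\"onig's lemma gives an infinite non-faulty path, which lifts through $\sim$ to the desired (progressive, hence maximal) execution.

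The main obstacle is the \emph{faulty} companion realizing the same $\beta$. Indeterminacy only guarantees, for each $m$, some faulty execution realizing the $m$-th prefix, but in these witnesses the fault may occur later and later, so a naive K\"onig's-lemma branch need not remain faulty. The delicate point is to make a fault \emph{persist} around the loop: writing $F$ for the finite set of faulty classes appearing in the loop state $\cvS_0$, I must exhibit a faulty quotient path from some $c\in F$ back into $F$ labelled by $\beta_{\mathrm{loop}}$, i.e.\ a cycle in the faulty loop-reachability relation on $F$. I would obtain this by a pumping argument — take faulty witnesses whose fault is pinned to a bounded position relative to the loop count (pigeonholing the finitely many possible fault prefixes), so that arbitrarily long faulty paths through $F$ appear, then pigeonhole a repeated faulty class to close the cycle and thread it forever after an initial faulty descent. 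This is where finiteness of the quotient and the recurrence of $\cvS_0$ must be combined most carefully, and it is exactly here that one must verify that an indeterminate loop is genuinely traversable by a faulty execution.
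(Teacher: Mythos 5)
Your left-to-right direction is essentially the paper's: the same strategy (answer $yes$ exactly when the current estimator state is purely faulty), justified through Lemma~\ref{lem:estimator}. Your remark that an all-indeterminate tail of the estimator run would force a repeated indeterminate state, hence a forbidden reachable cycle of indeterminate states, is in fact a more careful justification than the paper's bare assertion that the estimator run eventually stays in faulty states, and it suffices for the winning condition (one $yes$ after the fault is enough). For the converse you reorganize the argument --- exhibiting one faulty and one non-faulty infinite execution sharing the untimed observation trace $\gamma=\alpha\beta\beta\cdots$, which defeats every strategy at once, instead of the paper's case split on whether $\lambda$ ever answers $yes$ on a prefix of $\gamma$ --- but the two formulations need the same two witnesses. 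Your K\"onig's-lemma construction of the non-faulty companion is sound, since non-faulty quotient paths are prefix-closed (faultiness is absorbing by \textbf{D2}, \textbf{D3}, \textbf{T1}) and the quotient has finite index.

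The genuine gap is the faulty companion, and your pumping sketch does not close it. Lemma~\ref{lem:estimator} gives, for each $k$, a finite faulty execution realizing $\alpha\beta^k$, but the fault in the $k$-th witness may occur inside the \emph{last} copy of $\beta$; then the induced quotient path spends only a bounded suffix inside the set $F$ of faulty classes of the loop state, no faulty class need repeat, and pigeonholing ``fault prefixes'' does not manufacture arbitrarily long paths through $F$. Concretely, the loop state $\cvS$ may contain a single faulty class $c^*$ all of whose predecessors along fragments realizing the loop label lie in $\cvS\setminus F$, while $c^*$ itself has no continuation realizing the loop label at all: then $\Delta(\cvS,\beta)=\cvS$, every state of the loop is indeterminate, yet no infinite faulty execution realizes $\gamma$, and the Environment cannot play the faulty run your argument (and the paper's Case~1) requires. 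You have correctly isolated the one step the paper itself leaves unjustified --- its appeal to Lemma~\ref{lem:estimator}, a statement about \emph{finite} traces, to produce an \emph{infinite} faulty execution --- but flagging the difficulty is not resolving it: as written this direction is incomplete, and the example above indicates it cannot be completed from indeterminacy of the loop states alone; one needs the loop to be traversable within $S_f/_\sim$ (or a correspondingly strengthened notion of indeterminate cycle in the estimator).
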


\begin{proof}
Let $\cvE = \tuple{2^{Q/_\sim}, \Pi, \Delta}$ be the state estimator for $\autA$, and suppose that there are no loops of indeterminate states reachable from the initial states $\Pi$. Then we show how to define a winning strategy for the diagnoser in the Fault Detection Game played on $\autA$ from the initial states $\Theta$.
Given a finite observation for the fault diagnosis game $\sigma = (O_0, d_0) \trans{m_1} (O_1, d_1) \trans{m_2} \ldots \trans{m_n} (O_n,d_n)$, we define the corresponding untimed observation trace $\utrace(\sigma) = O_0 a_0 \ldots a_{l-1} O_l$ by removing all diagnoser moves and ignoring the $d_i$ component of the positions. Let $\Delta(O_0 a_0 \ldots a_{l-1} O_l	) = \cvS_0\ldots\cvS_l$. We define the strategy $\lambda$ on $\sigma$ as follows:
$$
\lambda(\sigma) = \left\{
	\begin{array}{ll}
		yes	& \text{if $\cvS_l$ is a faulty state of $\cvE$} \\
		no		& \text{otherwise}
	\end{array}
\right.
$$
Now, let $\rho = (\bv_0, d_0) \trans{m_1} (\bv_1, d_1) \trans{m_2} \ldots$ be an infinite run of the game compatible with $\lambda$, let $\alpha = \utrace(\obs(\rho)) = O_0 a_1 O_1 a_2 \ldots$ be the corresponding infinite untimed observation trace, and let $\Delta(\alpha) = \cvS_0 \cvS_1 \cvS_2 \ldots$. Two cases may arise:

\begin{itemize}
	\item $\rho$ is faulty. Since there are no loops of indeterminate states in $\cvE$, from Lemma~\ref{lem:estimator} we can conclude that there exists $i \geq 0$ such that for every $j \geq i$ $\cvS_j$ is a faulty state of the estimator. Hence, the strategy $\lambda$ is such that there exists $k$ such that $m_k = yes$, and thus $\rho$ is winning for the diagnoser.

	\item $\rho$ is non-faulty. From Lemma~\ref{lem:estimator} we can conclude that all $\cvS_i$ are either non-faulty or indeterminate.	Hence, the strategy $\lambda$ is such that $m_i = no$ for every diagnoser move, and $\rho$ is winning for the diagnoser.
\end{itemize}

\noindent In both cases the diagnoser wins the game, so we can conclude that $\lambda$ is a winning strategy for the diagnoser in the Fault Detection Game on $\autA$ from the initial states $\Theta$.

\medskip

Conversely, suppose that there exists a loop of indeterminate states reachable from $\Pi$ in $\cvE$. This implies that there exist an indeterminate state $\cvS$ and two time-abstract observation traces $\alpha = O_0 a_0 \ldots a_{n-1} O_n$ and $\beta = O_n a_n \ldots a_{m-1} O_m$ such that:
\begin{compactenum}
	\item $\Delta(\cvS_0, \alpha) = \cvS_0 \ldots \cvS_n$ is such that $\cvS_0 \in \Pi$ and $\cvS_n = \cvS$, and
	\item $\Delta(\cvS_n, \beta) = \cvS_n \ldots \cvS_m$ is such that $\cvS_n = \cvS_m = \cvS$ and $\cvS_i$ is an indeterminate state for each $n \leq i \leq m$.
\end{compactenum}

\noindent Now, suppose by contradiction that there exists a winning strategy $\lambda$ for the diagnoser, and consider the infinite time-abstract observation trace $\gamma = \alpha \beta \beta \beta \ldots$. Two cases may arise:

\begin{itemize}
	\item For every finite prefix $\gamma_i$ of $\gamma$, $\lambda(\gamma_i) = no$. By Lemma~\ref{lem:estimator}, since every state in $\beta$ contains a faulty equivalence class, we have that there exists a faulty execution $\alpha$ of $\autA$ such that $\utrace(\alpha) = \gamma$. This implies that it is possible to build an infinite faulty run of the game that is winning for the environment, against the hypothesis that $\lambda$ is winning for the diagnoser. 
	
	\item There exists a finite prefix $\gamma_i$ of $\gamma$ such that $\lambda(\gamma_i) = yes$. By Lemma~\ref{lem:estimator}, since every state in $\beta$ contains a non-faulty equivalence class, we have that there exists a non-faulty execution $\alpha_i$ of $\autA$ such that $\utrace(\alpha_i) = \gamma_i$. This implies that it is possible to build a run of the game that is winning for the environment, against the hypothesis that $\lambda$ is winning for the diagnoser. 
\end{itemize}

\noindent In both cases a contradiction is found, and the thesis is proved.
\end{proof}

Let $\Th$ be a logical theory. If all the components of a hybrid automaton with faults $\autA$ are definable in $\Th$, we say that $\autA$ is \emph{definable in $\Th$}. Moreover, a class of hybrid automata with faults $\clC$ is \emph{definable in $\Th$} if every $\autA \in \clC$ is definable in $\Th$. The previous theorems shows that the state estimator can be used to define a winning strategy for the diagnoser in the Fault Detection Game. However, it does necessarily implies that we can compute such a strategy, since the theory used to define the automata is not necessarily decidable. Moreover, even when $\Th$ is decidable it is not guaranteed that a bisimulation with finite quotient that can be effectively computed. The following theorem states that if some conditions on the considered theory and on the observation of external variables are respected, then Theorem~\ref{teo:diagnosability} provides an algorithmic solution to the diagnosability and the diagnoser synthesis problem. 

\begin{theorem}\label{teo:decidability}
Let $\Th$ be a decidable theory. Let $\clC$ be a class of Hybrid Automata with Faults that can be defined in $\Th$ and such that for every $\autA$ in $\clC$, there exists a bisimulation with finite quotient $\sim$ that can be effectively computed. Then the time-abstract diagnosability problem in the class $\clC$ is decidable for every observation $\cvO$ definable in $\Th$. Moreover, a winning strategy for the diagnoser can be computed, if possible.
\end{theorem}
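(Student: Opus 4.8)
The plan is to turn the combinatorial characterisation of Theorem~\ref{teo:diagnosability} into an effective procedure by showing that, under the stated hypotheses, the state estimator $\cvE = \tuple{2^{S/_\sim}, \Pi, \Delta}$ is a \emph{finite} object that can be \emph{explicitly constructed}, after which deciding diagnosability and synthesising a strategy become standard finite-graph tasks. First I would fix $\autA \in \clC$ and an observation $\cvO$ definable in $\Th$, and invoke the hypothesis that a bisimulation $\sim$ with finite quotient can be effectively computed in order to obtain the finite set of equivalence classes $S/_\sim$ together with representatives. Because $\sim$ respects the faulty/non-faulty partition (the Lemma preceding Definition~\ref{def:estimator}), each class is entirely faulty or entirely non-faulty, and I can label it by deciding membership of a representative in $Q_f$; since $Q_f$ and $\cvO$ are definable in $\Th$ and $\Th$ is decidable, I can likewise compute the observable of each class. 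This already yields, for each estimator state $\cvS \in 2^{S/_\sim}$, its classification as faulty, non-faulty, or indeterminate.

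The core of the argument is the effective computation of the initial set $\Pi$ and the transition function $\Delta$. The defining conditions for $\Pi$ in Definition~\ref{def:estimator} are first-order over the definable ingredients $\Theta$, $\cvO$ and the classes of $S/_\sim$, so each candidate $\cvS \in \Pi$ (finitely many) is selected by deciding a single $\Th$-formula. For $\Delta$ the subtlety is that its definition quantifies over executions $\alpha = \tau_0 a_0 \ldots a_n \tau_n$ whose untimed trace is $O_0 a O$, and such executions may contain arbitrarily many internal actions, so the quantification ranges over unbounded-length objects. Here I would exploit the bisimulation to reduce to the quotient: a single ``trajectory-then-action'' step between classes, $\eclass{\bv} \trans{a} \eclass{\bv'}$, is expressible by one existential $\Th$-formula (quantifying over a single trajectory $\tau \in \cvT$ and using $D$), is well defined on $S/_\sim$ precisely because $\sim$ is a bisimulation, and is decidable for each of the finitely many triples of classes and actions. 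The untimed trace $O_0 a O$ then corresponds to the internal-action closure of this relation, one external $a$-step, and a further internal closure, with observables read off the labels computed above --- all of which are transitive closures of a \emph{finite} relation, hence computable by graph algorithms rather than by reasoning in $\Th$. Composing these yields $\Delta(\cvS, a, O)$ for every $\cvS$, $a$, $O$, and thus the full finite estimator $\cvE$; the correctness of this construction is exactly Lemma~\ref{lem:estimator}.

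With $\cvE$ in hand, deciding time-abstract diagnosability reduces, by Theorem~\ref{teo:diagnosability}, to checking the \emph{absence} of a loop of indeterminate states reachable from $\Pi$: compute the subset of estimator states reachable from $\Pi$, then test whether the subgraph induced by the indeterminate states among them contains a cycle (e.g.\ via a strongly-connected-component decomposition). This is decidable, establishing decidability of the diagnosability problem. For the synthesis part, if no such loop exists then the proof of Theorem~\ref{teo:diagnosability} already exhibits a winning strategy, namely $\lambda(\sigma) = yes$ precisely when the estimator state $\Delta(\utrace(\sigma))$ is faulty; since $\cvE$ is now an explicit finite automaton, this strategy is computable and indeed implementable as a finite-state device that tracks the estimator state while reading the observation.

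I expect the main obstacle to be the effective computation of $\Delta$: turning the quantification over unbounded executions in Definition~\ref{def:estimator} into finitely many decidable $\Th$-queries together with finite closure computations. The delicate points are \emph{(i)} arguing that the single-step quotient relation is genuinely expressible as one $\Th$-formula despite trajectories carrying continuous, time-parametrised data, and \emph{(ii)} verifying that the internal-action closures capture exactly the untimed traces of the form $O_0 a O$, so that the computed $\Delta$ coincides with the one of Definition~\ref{def:estimator} as certified by Lemma~\ref{lem:estimator}. The remaining steps are routine once finiteness and definability have been secured.
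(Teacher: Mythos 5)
Your proposal is correct in outline and follows essentially the same route as the paper: effectively construct the finite state estimator $\cvE$, then reduce diagnosability to cycle detection on the indeterminate states and read the strategy off the proof of Theorem~\ref{teo:diagnosability}. Two points of comparison are worth recording. First, where the paper disposes of the computability of $\Delta$ in one line --- it reduces $\Delta(\cvS,a,O)$ to a reachability query on $\autA$ and invokes the known fact that reachability is decidable for every class admitting an effectively computable finite bisimulation quotient --- you unfold that step explicitly: one decidable $\Th$-formula per single ``trajectory-then-action'' step between equivalence classes, followed by transitive closures on the finite quotient graph. Your version is more self-contained and makes visible exactly where decidability of $\Th$ is used; the paper's version is shorter because it delegates to the standard reachability result for such classes. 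Second, there is a small but genuine gap in your argument: the theorem's hypothesis only provides \emph{some} effectively computable bisimulation $\sim$ with finite quotient, and nothing guarantees that $\sim$ respects the observation $\cvO$. Your step ``I can likewise compute the observable of each class'' presupposes that every class lies inside a single observable, which is exactly the property that may fail. The paper handles this by refining $\sim$ to a finer bisimulation $\approx$ that respects $\cvO$ (possible because $\cvO$ is definable in the decidable theory $\Th$, using the standard partition-refinement algorithm), and only then builds the estimator over $S/_\approx$. You should insert this refinement step before labelling classes with observables; once it is in place, the rest of your argument goes through.
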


\begin{proof}
To prove that that both the time-abstract diagnosability and the time-abstract diagnoser synthesis problems are decidable we have to show how to compute a state estimator $\cvE$ for the automaton $\autA$. 

First of all, let $\cvO$ be a definable observation for the external variables, and let $\sim$ a bisimulation with finite quotient for $\autA$. In general, it is not guaranteed that $\sim$ respects $\cvO$. However, since $\cvO$ is definable in $\Th$, and $\Th$ is decidable, we can always refine $\sim$ to a finer bisimulation $\approx$ respecting $\cvO$ by using the bisimulation algorithm given in~\cite{Brihaye04,Henzinger96}. Since both $\cvO$ and $S/_\approx$ are finite sets, to prove that $\cvE$ can be effectively computed it is sufficient to prove that the the transition relation $\Delta$ is computable. Given a state $\cvS$ of the estimator, an action $a \in E$, and an observable $O \in \cvO$, computing the successor state $\Delta(\cvS, a, O)$ can be reduced to a reachability problem on $\autA$. Since it is known that reachability is decidable for all classes of hybrid automata for which there exists a bisimulation with finite quotient that can be effectively computed, then $\Delta$ is computable and there exists an algorithm that can build the state estimator for $\autA$.

Once that the state estimator $\cvE$ has been built, we can use it for solving both the time-abstract diagnosability and the time-abstract diagnoser synthesis problems as follows.

\begin{itemize}
	\item From Theorem~\ref{teo:diagnosability} we know that there exists a winning strategy for the diagnoser in the Fault Detection Game if and only if there are no loops of indeterminate states in $\cvE$. Since the state estimator is a finite automaton, existence of such loops can be determined by computing a depth-first visit of $\cvE$, and thus the time-abstract diagnosability problem is decidable.
	
	\item The proof of Theorem~\ref{teo:diagnosability} shows how the state estimator can be used to define a winning strategy for the diagnoser in the Fault Detection Game. Since the the state estimator can be effectively computed, we have that such a strategy can be computed. 
		\end{itemize}

\noindent Hence, both problems are decidable under the considered assumptions.
\end{proof}

This decidability results is very general: examples of classes of hybrid automata that respects the conditions of Theorem~\ref{teo:decidability} are Timed Automata~\cite{Alur94}, Simple Multirate Automata~\cite{Alur95}, O-minimal Hybrid Automata~\cite{Brihaye04,Lafferriere00}, and STORMED Hybrid Automata~\cite{Vladimerou08}. Hence, for all such classes of systems, the time-abstract diagnosability problem and the time-abstract diagnoser synthesis problem is decidable. Moreover, the discovery of more classes of hybrid automata respecting the conditions of the theorem immediately leads to new classes of systems for which the two fault-diagnosis problems considered in this paper are decidable. 

The complexity of the two problems depends on the size of the bisimulation quotient $S/_\sim$: if $n$ is the number of equivalence classes, then the size of the state estimator $\cvE$ is exponential in $n$. Since computing a depth-first visit on a finite transition system is in LOGSPACE, we have that the time-abstract diagnosability problem is solvable with polynomial space w.r.t. $n$. Theorem~\ref{teo:decidability} proves that solving the time-abstract diagnoser synthesis problem corresponds to compute the state estimator $\cvE$ for the considered system. Hence, this second problem can be solved using an exponential amount of time w.r.t. $n$.

It is worth to notice that for most classes of hybrid automata, like Timed Automata, Initialized Rectangular Automata, and of o-minimal systems, like Pfaffian Hybrid Automata, the number of equivalence classes in $S/_\sim$ is exponential in the size of the automaton. Hence, for those classes the time-abstract diagnosability problem  is in EXPSPACE and the time-abstract diagnoser synthesis problem is in 2-EXPTIME.

\section{Conclusions}

In this paper we studied the fault-diagnosis problem for hybrid systems from a game-theoretical point of view.  We used the formalism of hybrid automata for modeling hybrid systems with faults and to define the notions of diagnosability and time-abstract diagnosability. We focused our attention on time-abstract diagnosability and we defined a Fault Diagnosis Game on hybrid automata with faults between two players, the environment and the diagnoser. Existence of a winning strategy for the diagnoser implies that faults can be identified correctly, while computing such a winning strategy corresponds to implementing a diagnoser for the system. Finally, we shown how to determine the existence of a winning strategy, and how to compute it, for all classes of hybrid automata definable in a decidable theory $\Th$ and such that a bisimulation with finite quotient can be effectively computed, like timed automata and o-minimal hybrid automata.

The results presented in the paper can be extended in many directions. First of all, by considering the stronger notion of diagnosability instead of time-abstract diagnosability. Then, by extending the results also to undecidable classes of hybrid automata, by exploiting abstraction refinement and approximation techniques.
Finally, in the current framework there is no upper bound on the time that elapses between the occurrence of the fault and the detection by the diagnoser. We envision the extension of our approach to reward and priced hybrid games~\cite{Adler05,Bouyer08} as a possible way to provide minimal-delay strategies for the diagnoser.

\bibliographystyle{eptcs}
\bibliography{unified}

\end{document}